\newtheorem{theorem}{Theorem}[section]
\newtheorem{lemma}[theorem]{Lemma}
\newtheorem{problem}[theorem]{Problem}
\newcommand{\ignore}[1]{}
\newcommand{\cA}{{\cal A}}
\newcommand{\cB}{\mathcal{B}}
\newcommand{\cE}{{\cal E}}
\newcommand{\cN}{{\cal N}}
\newcommand{\sgn}{\mathrm{sgn}}
\newcommand{\RR}{\mathbb{R}}
\newcommand{\EX}{\hbox{\bf E}}
\newcommand{\whimp}{WHIMP}
\newcommand{\est}{\mathrm{est}}
\newcommand{\nnz}{\mathrm{nnz}}
\newcommand{\Sec}[1]{\hyperref[sec:#1]{\S\ref*{sec:#1}}} 
\newcommand{\Eqn}[1]{\hyperref[eq:#1]{(\ref*{eq:#1})}} 
\newcommand{\Fig}[1]{\hyperref[fig:#1]{Fig.\,\ref*{fig:#1}}} 
\newcommand{\Tab}[1]{\hyperref[tab:#1]{Tab.\,\ref*{tab:#1}}} 
\newcommand{\Thm}[1]{\hyperref[thm:#1]{Theorem\,\ref*{thm:#1}}} 
\newcommand{\Fact}[1]{\hyperref[fact:#1]{Fact\,\ref*{fact:#1}}} 
\newcommand{\Lem}[1]{\hyperref[lem:#1]{Lemma\,\ref*{lem:#1}}} 
\newcommand{\Prop}[1]{\hyperref[prop:#1]{Prop.~\ref*{prop:#1}}} 
\newcommand{\Cor}[1]{\hyperref[cor:#1]{Corollary~\ref*{cor:#1}}} 
\newcommand{\Conj}[1]{\hyperref[conj:#1]{Conjecture~\ref*{conj:#1}}} 
\newcommand{\Def}[1]{\hyperref[def:#1]{Definition~\ref*{def:#1}}} 
\newcommand{\Alg}[1]{\hyperref[alg:#1]{Alg.~\ref*{alg:#1}}} 
\newcommand{\Ex}[1]{\hyperref[ex:#1]{Ex.~\ref*{ex:#1}}} 
\newcommand{\Clm}[1]{\hyperref[clm:#1]{Claim~\ref*{clm:#1}}} 
\newcommand{\Prb}[1]{\hyperref[prb:#1]{Problem~\ref*{prb:#1}}} 
\newcommand{\Step}[1]{\hyperref[step:#1]{Step~\ref*{step:#1}}} 
\begin{document}






%

\title{When Hashes Met Wedges: A Distributed Algorithm
for Finding High Similarity Vectors}
%
%
%
%
%


\numberofauthors{3}

\author{
\alignauthor
Aneesh Sharma\\
\affaddr{Twitter, Inc.}\\
\email{aneesh@twitter.com}
\alignauthor
C. Seshadhri\\
\affaddr{University of California}\\
\affaddr{Santa Cruz, CA}\\
\email{sesh@ucsc.edu}
\alignauthor
Ashish Goel \thanks{Research supported in part by NSF Award 1447697.}\\
\affaddr{Stanford University}\\
\email{ashishg@stanford.edu}
}
\maketitle

\begin{abstract}
  Finding similar user pairs is a fundamental task in social networks,
  with numerous applications in ranking and personalization tasks such
  as link prediction and tie strength detection. A common
  manifestation of user similarity is based upon network structure:
  each user is represented by a vector that represents the user's
  network connections, where pairwise cosine similarity among these
  vectors defines user similarity. The predominant task for user
  similarity applications is to discover all similar pairs that have a
  pairwise cosine similarity value larger than a given threshold
  $\tau$. In contrast to previous work where $\tau$ is assumed to be
  quite close to 1, we focus on recommendation applications where
  $\tau$ is small, but still meaningful. The all pairs cosine
  similarity problem is computationally challenging on networks with
  billions of edges, and especially so for settings with small
  $\tau$. To the best of our knowledge, there is no practical solution
  for computing all user pairs with, say $\tau = 0.2$ on large social
  networks, even using the power of distributed algorithms.

  Our work directly addresses this challenge by introducing a new
  algorithm --- WHIMP{} --- that solves this problem efficiently in
  the MapReduce model.  The key insight in WHIMP{} is to combine the
  ``wedge-sampling" approach of Cohen-Lewis for approximate matrix
  multiplication with the SimHash random projection techniques of
  Charikar. We provide a theoretical analysis of WHIMP, proving that
  it has near optimal communication costs while maintaining
  computation cost comparable with the state of the art.  We also
  empirically demonstrate WHIMP{}'s scalability by computing all highly
  similar pairs on four massive data sets, and show that it accurately
  finds high similarity pairs. In particular, we note that WHIMP{}
  successfully processes the entire Twitter network, which has tens of
  billions of edges.

%
%
%
%
%
\end{abstract}

\keywords{Similarity search,  nearest neighbor search, matrix multiplication, wedge sampling}

\section{Introduction} \label{sec:intro}

Similarity search among a collection of objects is one of the oldest
and most fundamental operations in social networks, web mining, data
analysis and machine learning. It is hard to overstate the importance
of this problem: it is a basic building block of personalization and
recommendation systems~\cite{DDGR07, GGLS+13}, link
prediction~\cite{AdAd03,LiKl07}, and is found to be immensely useful
in many personalization and mining tasks on social networks and
databases~\cite{XNR10,MSLC01,YYH05, RRS11book}. Indeed, the list of
applications is so broad that we do not attempt to survey them here
and instead refer to recommender systems and data mining textbooks
that cover applications in diverse areas such as collaborative
filtering~\cite{RRS11book, mmds_book}.

Given the vast amount of literature on similarity search, many forms
of the problem have been studied in various applications. In this work
we focus on the social and information networks setting where we can
define pairwise similarity among users on the network based on having
common connections. This definition of similarity is particularly
relevant in the context of information networks where users generate
and consume content (Twitter, blogging networks, web networks,
etc.). In particular, the directionality of these information networks
provides a natural measure that is sometimes called ``production''
similarity: two users are defined to be similar to each other if they
are followed by a common set of users. Thus, ``closeness" is based on
common followers, indicating that users who consume content from one
of these users may be interested in the other ``producer'' as
well.\footnote{One can also define ``consumption'' similarity, where
  users are similar if they follow the same set of users.} The most
common measure of closeness or similarity here is \emph{cosine
  similarity}. This notion of cosine similarity is widely used for
applications~\cite{AdAd03,KaMeCh05,AnPi05,LiKl07} and is in particular
a fundamental component of the Who To Follow recommendation system at
Twitter~\cite{GSWY13,GGLS+13}.

Our focus in this work is on the computational aspect of this widely
important and well studied problem. In particular, despite the large
amount of attention given to the problem, there remain significant
scalability challenges with computing {\em all-pairs} similarity on
massive size information networks. A unique aspect of this problem on
these large networks is that cosine similarity values that are
traditionally considered ``small'' can be quite meaningful for social
and information network applications --- it may be quite useful and
indicative to find users sharing a cosine similarity value of 0.2, as
we will illustrate in our experimental results. With this particular
note in mind, we move on to describing our problem formally and
discuss the challenges involved in solving it at scale.

\begin{figure*}[t]
  \centering
  \subfloat{\label{fig:pr30}
    \includegraphics[width=.3\textwidth,trim=0 0 0 0]{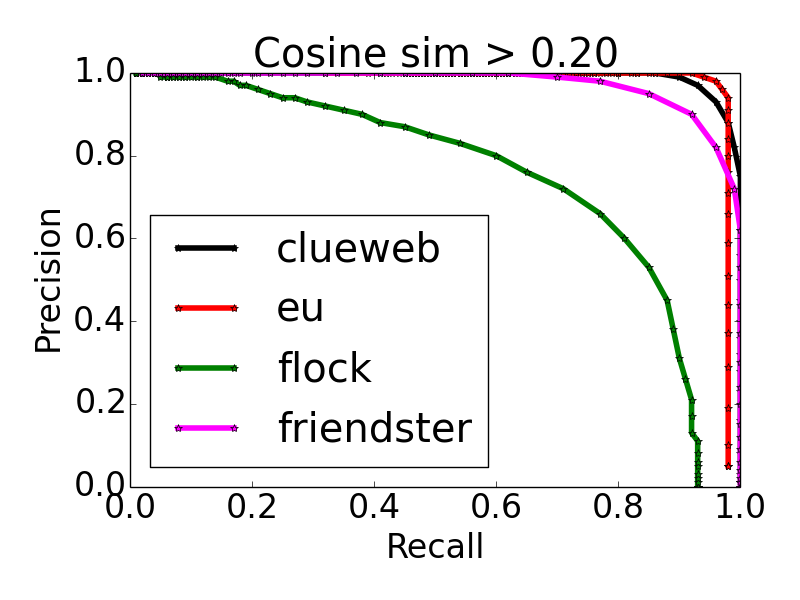}
  }
  \subfloat{\label{fig:eu30}
    \includegraphics[width=.3\textwidth,trim=0 0 0 0]{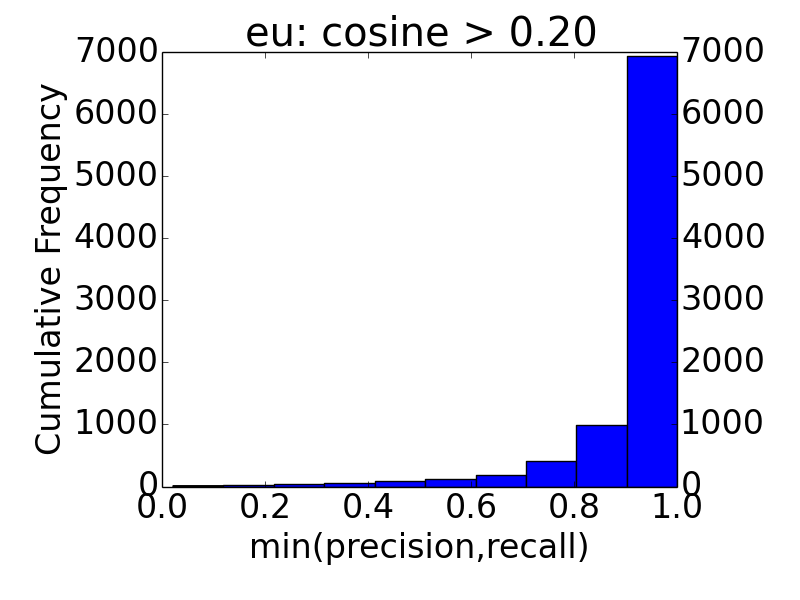}
  }
  \subfloat{\label{fig:flock30}
    \includegraphics[width=.3\textwidth,trim=0 0 0 0]{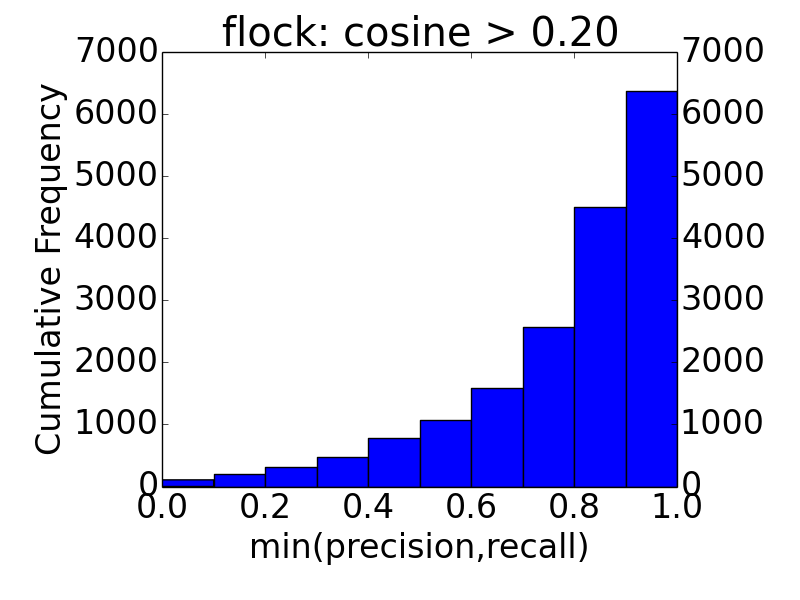}
  }
  \caption{Result on \whimp{} for $\tau = 0.2$: the left plot the precision-recall curves
  for finding all entries in $A^TA$ above $0.2$ (with respect to a sampled evaluation set).
  The other plots give the cumulative distribution, over all sampled users,
  of the minimum of precision and recall. We observe that for an overwhelming majority of users,
  \whimp{} reliably finds more than 70\% of $0.2$-similar users.
  }
  \label{fig:intro}
\end{figure*}

\subsection{Problem Statement} \label{sec:problem}

As mentioned earlier, the similarity search problem is relevant to a
wide variety of areas, and hence there are several languages for
describing similarity: on sets, on a graph, and also on matrix
columns. We'll attempt to provide the different views where possible,
but we largely stick to the matrix notation in our work. Given two
sets $S$ and $T$, their cosine similarity is
$|S\cap T| / \sqrt{|S|\cdot|T|}$, which is a normalized intersection
size. It is instructive to define this geometrically, by representing
a set as an incidence vector.  Given two (typically non-negative)
vectors $\vec{v}_1$ and $\vec{v}_2$, the cosine similarity is
$\vec{v}_1 \cdot \vec{v}_2/(\|\vec{v}_1\|_2 \|\vec{v}_2\|_2)$.  This
is the cosine of the angle between the vectors; hence the name.

In our context, the corresponding $\vec{v}$ for some user is the
incidence vector of followers. In other words, the $i$th coordinate
of $\vec{v}$ is $1$ if user $i$ follows the user, and $0$ otherwise.
Abusing notation, let us denote the users by their corresponding vectors,
and we use the terms ``user" and ``vector" interchangably. Thus, we can define our problem as follows.

\begin{problem} \label{prb:hemp}
Given a sets of vectors $\vec{v}_1, \vec{v}_2, \ldots, \vec{v}_m$ in $(\RR^+)^d$,
and threshold $\tau > 0$:
determine all pairs $(i,j)$ such that $\vec{v}_i\cdot \vec{v}_j \geq \tau$.

Equivalently, call $\vec{v}_j$ $\tau$-similar to $\vec{v}_i$ if $\vec{v}_i \cdot \vec{v}_j \geq \tau$.
For every vector $\vec{v}_i$, find all vectors $\tau$-similar to $\vec{v}_j$.
\end{problem}

In terms of (approximate) information retrieval, the latter formulation represents a more
stringent criterion. Instead of good accuracy in find similar pairs overall, we demand
high accuracy for most (if not all) users. This is crucial for any recommendation
system, since we need good results for most users. More generally, we want good
results at all ``scales", meaning accurate results for users with small followings
as well as big followings.  Observe
that the sparsity of $\vec{v}$ is inversely related to the indegree (following size)
of the user, and represents their popularity.
Recommendation needs to be of high quality both for
newer users (high sparsity $\vec{v}$) and celebrities (low sparsity $\vec{v}$).

We can mathematically express \Prb{hemp} in matrix terms as follows.
Let $A$ be the $d \times n$ matrix where the $i$th column is
$\vec{v}_i/\|\vec{v}_i\|_2$.  We wish to find all large entries in the
Gramian matrix $A^TA$ (the matrix of all similarities). It is
convenient to think of the input as $A$. Note that the non-zeros of
$A$ correspond exactly to the underlying social network edges.

\subsection{Challenges} \label{sec:chal}

\textbf{Scale:} The most obvious challenge for practical applications
is the sheer size of the matrix $A$.  For example, the Twitter
recommendation systems deal with a matrix with hundreds of millions of
dimensions, and the number of non-zeros is in many tens of
billions. Partitioning techniques become extremely challenging for
these sizes and clearly we need distributed algorithms for \Prb{hemp}.

\textbf{The similarity value $\tau$:} An equally important (but less
discussed) problem is the relevant setting of threshold $\tau$ in \Prb{hemp}.
In large similarity search applications, a cosine similarity (between users)
of, say, $0.2$ is highly significant. Roughly speaking, if user
$u$ is $0.2$-similar to $v$, then 20\% of $u$'s followers also follow $v$.
For recommendation, this is an immensely strong signal. But for
many similarity techniques based on hashing/projection, this
is too small~\cite{InMo98,AnIn08,ShLi14,SL15,SL15-2,AIL+15}.
Techniques based on LSH and projection usually detect similarities
above 0.8 or higher. Mathematically, these methods have storage
complexities that scale as $1/\tau^2$, and are simply infeasible
when $\tau$ is (say) $0.2$.

We stress that this point does not receive much attention. But in our
view, it is the primary bottleneck behind the lack of methods to solve
\Prb{hemp} for many real applications.

\textbf{The practical challenge:} This leads us to main impetus
behind our work.

\emph{For the matrix $A$ corresponding to the
Twitter network with $O(100\textrm{B})$ edges, find
(as many as possible) entries in $A^TA$ above $0.2$.
For a majority of users, reliably find many $0.2$-similar
users.}

%
%
%
%
%
%
%
\subsection{Why previous approaches fail} \label{sec:prev}

The challenge described above exemplifies where big data forces an
algorithmic rethink. Matrix multiplication and variants thereof have
been well-studied in the literature, but no solution works for such a
large matrix.  If a matrix $A$ has a 100 billion non-zeroes, it takes
upwards of 1TB just to store the entries. This is more than an order
of magnitude of the storage of a commodity machine in a cluster. Any
approach of partitioning $A$ into submatrices cannot scale.

There are highly tuned libraries like
Intel MKL's BLAS~\cite{MKL14} and CSparse~\cite{Davis06}.
But any sparse matrix multiplication routine~\cite{Gu78,AmPa09} will generate all triples
$(i,i',j)$ such that $A_{i,j}A_{i',j} \neq 0$. In our example, this
turns out to be more than 100 \emph{trillion} triples. This is infeasible
even for a large industrial-strength cluster.

Starting from the work of Drineas, Kannan, and Mahoney, there is rich
line of results on approximate matrix multiplication by subsampling
rows of the
matrix~\cite{DrKa01,DrMa05,DrKaMa06,Sa06,BeWo08,MaZo11,Pa13,HoIp15}.
These methods generate approximate products according to Frobenius
norm using outer products of columns. This would result in dense
matrices, which is clearly infeasible at our scale.  In any case, the
large entries (of interest) contribute to a small part of the output.

\textbf{Why communication matters:} There are upper bounds on the
total communication even in industrial-strength Hadoop clusters, and
in this work we consider our upper bound to be about
100TB\footnote{Note that if a reducer were to process 5GB of data
  each, processing 100TB would require 20,000 reducers.}. A promising
approach for \Prb{hemp} is the wedge sampling method of
Cohen-Lewis~\cite{CoLe99}, which was further developed in the diamond
sampling work of Ballard et al~\cite{BaKo+15}.  The idea is to set up
a linear-sized data structure that can sample indices of entries
proportional to value (or values squared in~\cite{BaKo+15}). One then
generates many samples, and picks the index pairs that occur most
frequently. These samples can be generated in a distributed manner, as
shown by Zadeh and Goel~\cite{ZaGo13}.

The problem is in the final communication. The sampling calculations show
that about $10\tau^{-1} \sum_{i,j} \vec{a_i}\cdot\vec{a_j}$ samples are required
to get all entries above $\tau$ with high probability. These samples must be collected/shuffled
to actually find the large entries. In our setting, this is upwards of 1000TB
of communication.

\textbf{Locality Sensitive Hashing:} In the normalized setting,
maximizing dot product is equivalent to minimizing distance. Thus,
\Prb{hemp} can be cast in terms of finding all pairs of points within
some distance threshold. A powerful technique for this problem is
Locality Sensitive Hashing (LSH)~\cite{InMo98,GiInMo99,AnIn08}.
Recent results by Shrivastava and Li use LSH ideas for the MIPS
problem~\cite{ShLi14,SL15,SL15-2}.  This essentially involves
carefully chosen low dimensional projections with a reverse index for
fast lookup. It is well-known that LSH requires building hashes that
are a few orders of magnitude more than the data size. Furthermore, in
our setting, we need to make hundreds of millions of queries, which
involve constructing all the hashes, and shuffling them to find the
near neighbors.  Again, this hits the communication bottleneck.

\subsection{Results} \label{sec:results}

We design \whimp, a distributed algorithm to solve \Prb{hemp}. We specifically describe
and implement \whimp{} in the MapReduce model, since it is the most appropriate for our applications.

\textbf{Theoretical analysis:} \whimp{} is a novel combination of
wedge sampling ideas from Cohen-Lewis~\cite{CoLe99} with random
projection-based hashes first described by Charikar~\cite{Ch02}.  We
give a detailed theoretical analysis of \whimp{} and prove that it has
near optimal communication/shuffle cost, with a computation cost
comparable to the state-of-the-art. To the best of our knowledge, it
is the first algorithm to have such strong guarentees on the
communication cost.  \whimp{} has a provable precision and recall
guarantee, in that it outputs all large entries, and does not output
small entries.

\textbf{Empirical demonstration:} We implement \whimp{} on Hadoop and
test it on a collection of large networks. Our largest network is {\tt
  flock}, the Twitter network with tens of billions of non-zeroes. We
present results in~\Fig{intro}.  For evaluation, we compute ground
truth for a stratified sample of users (details in~\Sec{setup}). All
empirical results are with respect to this evaluation.  Observe the
high quality of precision and recall for $\tau = 0.2$.  For all
instances other than {\tt flock} (all have non-zeros between 1B to
100B), the accuracy is near perfect. For {\tt flock}, \whimp{}
dominates a precision-recall over $(0.7, 0.7)$, a significant advance
for \Prb{hemp} at this scale.

Even more impressive are the distribution of precision-recall values. For each
user in the evaluation sample (and for a specific setting of parameters in \whimp),
we compute the precision and recall $0.2$-similar vectors. We plot the cumulative
histogram of the minimum of the precision and recall (a lower bound on any F-score)
for two of the largest datasets, {\tt eu} (a web network) and {\tt flock}.
For more than 75\% of the users, we get a precision and recall of more than $0.7$
(for {\tt eu}, the results are even better). Thus, we are able to meet our challenge
of getting accurate results on an overwhelming majority of users.
(We note that in recent advances in using hashing techniques~\cite{ShLi14,SL15,SL15-2},
precision-recall curves rarely dominate the point $(0.4, 0.4)$.)

%
%

\section{Problem Formulation} \label{sec:prelims} Recall that the
problem of finding similar users is a special case of
Problem~\ref{prb:hemp}. Since our results extend to the more general
setting, in our presentation we focus on the $A^TB$ formulation for
given matrices $A$ and $B$. The set of columns of $A$ is the index set
$[m]$, denoted by $C_A$. Similarly, the set of columns of $B$, indexed by
$[n]$, is denoted by $C_B$. The dimensions of the underlying space are
indexed by $D = [d]$.  We use $a_1, \ldots$ to denote columns of $A$,
$b_1, \ldots$ for columns in $B$, and $r_1, r_2, \ldots$ for
dimensions. For convenience, we assume wlog that $n \geq m$.

We denote rows and columns of $A$ by $A_{d,*}$ and $A_{*,a}$
respectively. And similar notation is used for $B$.  We also use
$\nnz(\cdot)$ to denote the number of non-zeroes in a matrix.  For any
matrix $M$ and $\sigma \in \RR$, the thresholded matrix
$[M]_{\geq \sigma}$ keeps all values in $M$ that are at least
$\sigma$. In other words, $([M]_{\geq \sigma})_{i,j} = M_{i,j}$ if
$M_{i,j} \geq \sigma$ and zero otherwise. We use $\|M\|_1$ to be the
entrywise 1-norm. We will assume that $\|A^TB\|_1 \geq 1$. This is a
minor technical assumption, and one that always holds in matrix
products of interest.

We can naturally represent $A$ as a (weighted) bipartite graph
$G_A = (C_A, D, E_A)$, where an edge $(a,d)$ is present iff
$A_{d,a} \neq 0$. Analogously, we can define the bipartite graph
$G_B$. Their union $G_A \cup G_B$ is a tripartite graph denoted by
$G_{A,B}$. For any vertex $v$ in $G_{A,B}$, we use $N(v)$ for the
neighborhood of $v$.

Finally, we will assume the existence of a Gaussian random number
generator $g$. Given a binary string $x$ as input,
$g(x) \sim \cN(0,1)$. We assume that all values of $g$ are
independent.

\textbf{The computational model:} While our implementation (and focus)
is on MapReduce, it is convenient to think of an abstract distributed
computational model that is also a close proxy for MapReduce in our
setting~\cite{GoelMunagala2012}. This allows for a transparent
explanation of the computation and communication cost.

Let each vertex in $G_{A,B}$ be associated with a different
processor. Communication only occurs along edges of $G_{A,B}$, and
occurs synchronously. Each \emph{round} of communication involves a
single communication over all edges of $G_{A,B}$.

\section{High Level Description} \label{sec:high}

The starting point of our \whimp{} algorithm (Wedges and Hashes in
Matrix Product) is the wedge sampling method of Cohen-Lewis. A
distributed MapReduce implementation of wedge sampling (for the
special case of $A = B$) was given by Zadeh-Goel~\cite{ZaGo13}. In
effect, the main distributed step in wedge sampling is the
following. For each dimension $r \in [d]$ (independently and in
parallel), we construct two distributions on the index sets of vectors
in $A$ and $B$. We then choose a set of independent samples for each
of these distributions, to get pairs $(a,b)$, where $a$ indexes a
vector in $A$, and $b$ indexes a vector in $B$. These are the
``candidates" for high similarity. \emph{If enough candidates are
  generated}, we are guaranteed that the candidates that occur with
high enough frequency are exactly the large entries of $A^TB$.

The primary bottleneck with this approach is that the vast majority of
pairs generated occur infrequently, but dominate the total shuffle
cost. In particular, most non-zero entries in $A^TB$ are very small,
but in total, these entries account for most of $\|A^TB\|_1$. Thus,
these low similarity value pairs dominate the output of wedge sampling.

Our main idea is to construct an efficient, local ``approximate
oracle" for deciding if $A_{*,a}\cdot B_{*,b} \geq \tau$. This is
achieved by adapting the well-known SimHash projection scheme of
Charikar~\cite{Ch02}. For every vector $\vec{v}$ in our input, we
construct a compact logarithmic sized hash $h(\vec{v})$. By the
properties of SimHash, it is (approximately) possible to determine if
$\vec{u} \cdot \vec{v} \geq \tau$ only given the hashes $h(\vec{u})$
and $h(\vec{v})$. These hashes can be constructed by random
projections using near-linear communication. Now, each machine that
processes dimension $r$ (of the wedge sampling algorithm) collects
every hash $h(A_{*,a})$ for each $a$ such that $A_{r,a} \neq 0$
(similarly for $B$). This adds an extra near-linear communication
step, but all these hashes can now be stored locally in the machine
computing wedge samples for dimension $r$. This machines runs the same
wedge sampling procedure as before, but now when it generates a
candidate $(a,b)$, it first checks if $A_{*,a}\cdot B_{*,b} \geq \tau$
using the SimHash oracle. And this pair is emitted iff this condition
passes. Thus, the communication of this step is just the desired
output, since very few low similarity pairs are emitted.  The total
CPU/computation cost remains the same as the Cohen-Lewis algorithm.

\section{The significance of the main theorem} \label{sec:sign}

Before describing the actual algorithm, we state the main theorem and
briefly describe its significance.
\begin{theorem} \label{thm:weed}

  Given input matrices $A, B$ and threshold $\tau$, denote the set of
  index pairs output by \whimp{} algorithm by $S$. Then, fixing
  parameters $\ell = \lceil c\tau^{-2}\log n\rceil$,
  $s = (c(\log n)/\tau)$, and $\sigma = \tau/2$ for a sufficiently
  large constant $c$, the \whimp{} algorithm has the following
  properties with probability at least $1-1/n^2$:
\begin{asparaitem}
    \item \textrm{[Recall:]} If $(A^TB)_{a,b} \geq \tau$, $(a,b)$ is output.
    \item \textrm{[Precision:]} If $(a,b)$ is output, $(A^TB)_{a,b} \geq \tau/4$.
    \item The total computation cost is $O(\tau^{-1}\|A^TB\|_1\log n $
    $+ \tau^{-2}$ $(\nnz(A) $ $+ \nnz(B))\log n)$.
    \item The total communication cost is $O((\tau^{-1}\log n)\|[A^TB]_{\geq \tau/4}\|_1  $ $+ \nnz(A) + \nnz(B) +\tau^{-2}(m+n)\log n)$.
\end{asparaitem}

\end{theorem}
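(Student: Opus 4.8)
\emph{Proof plan.}
The plan is to analyze \whimp's two ingredients --- the SimHash oracle and the Cohen-Lewis wedge sampler --- almost separately, and glue them together at the end. The spine of the argument is a single high-probability event $\calE$ on which the SimHash estimates are accurate for \emph{every} pair of columns, not merely those the algorithm happens to generate; conditioning on $\calE$ is exactly what makes the precision guarantee --- and hence the communication bound --- go through without circularity. Concretely: fix a column $a$ of $A$ and $b$ of $B$; since columns are unit-normalized, $(A^TB)_{a,b}=\cos\theta_{a,b}$ for the angle $\theta_{a,b}$ between them, and by Charikar's analysis each of the $\ell$ random projections assigns them the same sign, independently, with probability $1-\theta_{a,b}/\pi$. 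Thus the normalized Hamming distance between $h(A_{*,a})$ and $h(B_{*,b})$ is an average of $\ell$ independent Bernoulli random variables with mean $\theta_{a,b}/\pi$, and Hoeffding's inequality bounds the probability it deviates from $\theta_{a,b}/\pi$ by more than $t$ by $2e^{-2\ell t^2}$. Taking $t=\Theta(\tau)$ and $\ell=\lceil c\tau^{-2}\log n\rceil$ drives this below $n^{-4}$, so a union bound over all $\le n^2$ pairs gives $\Pr[\calE]\ge 1-1/n^2$, where on $\calE$ the oracle --- which compares only the two hashes --- estimates $(A^TB)_{a,b}$ to within additive error $\tau/4$ for every pair. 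Declaring a pair ``similar'' exactly when this estimate is at least $\sigma=\tau/2$ then guarantees, on $\calE$: every pair with $(A^TB)_{a,b}\ge\tau$ is declared similar (its estimate is $\ge 3\tau/4>\sigma$), and every pair declared similar has $(A^TB)_{a,b}\ge\sigma-\tau/4=\tau/4$.

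\textbf{Recall and precision.}
Precision is now immediate on $\calE$, since \whimp{} emits only oracle-approved pairs. For recall it remains to show every pair with $(A^TB)_{a,b}\ge\tau$ is \emph{generated} as a candidate at least once. A single Cohen-Lewis wedge sample produces $(a,b)$ with probability exactly $(A^TB)_{a,b}/\|A^TB\|_1$ (using non-negativity of $A,B$), and the sampler draws $\Theta(\tau^{-1}\|A^TB\|_1\log n)$ samples in total --- each dimension $r$ contributing a number proportional to its weight $\|A_{r,*}\|_1\|B_{r,*}\|_1$, scaled by $s$ --- so a value-$\ge\tau$ pair has expected multiplicity $\Omega(\log n)$. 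A Chernoff bound and a union bound over the $\le n^2$ pairs show all of them are generated except with probability $\le 1/n^2$. Intersecting this event with $\calE$ and adjusting $c$, all four claims hold with probability $\ge 1-1/n^2$.

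\textbf{Costs.}
Computation splits into (i) hash construction --- $\ell$ Gaussian projections of each of the $m+n$ columns, at cost $O(\ell\cdot\nnz(v))$ per column $v$, totalling $O(\tau^{-2}(\nnz(A)+\nnz(B))\log n)$ --- and (ii) drawing and screening the $\Theta(\tau^{-1}\|A^TB\|_1\log n)$ candidates; these sum to the stated bound. Communication splits into three parts: assembling each column's hash from its nonzeros (shuffle $O(\nnz(A)+\nnz(B))$); making the table of all $m+n$ hashes, of total size $O(\ell(m+n))$, available to every sampling task ($O(\tau^{-2}(m+n)\log n)$); and the emitted candidates. For the last --- the crux --- observe that on $\calE$ every emitted pair has value $\ge\tau/4$ and pair $(a,b)$ is generated with expected multiplicity $\Theta(\tau^{-1}(A^TB)_{a,b}\log n)$, so the total emitted volume is at most $\sum_{(A^TB)_{a,b}\ge\tau/4}\Theta(\tau^{-1}(A^TB)_{a,b}\log n)=O(\tau^{-1}\log n\,\|[A^TB]_{\ge\tau/4}\|_1)$, which a Chernoff bound pins near its mean.

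\textbf{Main obstacle.}
The delicate step is the SimHash analysis: one must choose $\ell=\Theta(\tau^{-2}\log n)$, the threshold $\sigma=\tau/2$, and the Hoeffding slack $t$ so that, \emph{simultaneously} after a union bound over $\Theta(n^2)$ pairs, there are no false negatives above $\tau$ and no false positives below $\tau/4$ --- which forces one to track how the curvature of $\cos$ near the threshold converts an additive angle error into an additive cosine error, and to verify the loss is only a constant factor. The subtler conceptual point is that the communication bound is not a property of the sampler in isolation: it is the conditioning on the \emph{all-pairs} event $\calE$ (rather than only the generated pairs) that licenses the claim that the emitted set lies inside $\{(a,b):(A^TB)_{a,b}\ge\tau/4\}$, whose mass $\|[A^TB]_{\ge\tau/4}\|_1$ is precisely what keeps the shuffle small.
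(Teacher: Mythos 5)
Your proposal is correct and follows essentially the same route as the paper: an all-pairs SimHash accuracy event established via Hoeffding plus a union bound (the paper's Lemma~\ref{lem:est-entry}), then an expectation-plus-Chernoff analysis of the wedge-sampling multiplicities, conditioned on that event, to get recall, precision, and the $O(\tau^{-1}\log n\,\|[A^TB]_{\geq \tau/4}\|_1)$ output volume (Lemma~\ref{lem:gen}), with the cost accounting done by direct calculation as the paper also does. The only differences are cosmetic: you assume unit-normalized columns whereas the paper carries the factors $\|A_{*,a}\|_2\|B_{*,b}\|_2$ through the Mean Value Theorem step, and your constants in the failure probabilities differ from the paper's choices.
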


As labeled above, the first two items above are recall and precision.
The first term in the total computation cost is exactly that of
vanilla wedge sampling, $\tau^{-1}\|A^TB\|_1\log n$, while the second
is an extra near-linear term.  The total communication of wedge
sampling is also $\tau^{-1}\|A^TB\|_1\log n$.  Note that \whimp{} has
a communication of \\
$\tau^{-1}\|[A^TB]_{\geq \tau/4}\|_1\log n$.  Since
all entries in $A^TB$ are at most $1$,
$\|[A^TB]_{\geq \tau/4}\|_1 \leq \nnz([A^TB]_{\geq \tau/4})$.  Thus,
the communication of \whimp{} is at most
$(\tau^{-1}\log n) \nnz([A^TB]_{\geq \tau/4}$ plus an additional
linear term. The former is (up to the $\tau^{-1}\log n$ term) simply
the size of the output, and must be paid by any algorithm that outputs
all entries above $\tau/4$. Finally, we emphasize that the constant of
$4$ is merely a matter of convenience, and can be replaced with any
constant $(1+\delta)$.

In summary, \Thm{weed} asserts that \whimp{} has (barring additional
near-linear terms) the same computation cost as wedge sampling, with
nearly optimal communication cost.

\section{The \whimp{} algorithm} \label{sec:whimp}

The \whimp{} algorithm goes through three rounds of communication, each
of which are described in detail in Figure~\ref{alg:whimp}. The output
of \whimp{} is a list of triples $((a,b),\est_{a,b})$, where
$\est_{a,b}$ is an estimate for $(A^TB)_{a,b}$.  Abusing notation, we
say a pair $(a,b)$ is output, if it is part of some triple that is
output.

In each round, we have a step ``Gather". The last round has an output operation.
These are the communication operation. All other
steps are compute operations that are local to the processor involved.

\begin{figure}[!hbtp]
\fbox{
\begin{minipage}{0.45\textwidth}
{\whimp{} Round 1 (Hash Computation):

\smallskip
\begin{compactenum}
    \item For each $a \in C_A$:
    \begin{compactenum}
        \item Gather column $A_{*,a}$.
        \item Compute $\|A_{*,a}\|_2$.
        \item \label{step:ell} Compute bit array $h_a$ of length $\ell$
        as follows: $h_a[i] = \sgn\left(\sum_{r \in [d]} g(\langle r,i\rangle) A_{r,a}\right)$.
    \end{compactenum}
    \item Perform all the above operations for all $b \in C_B$.
\end{compactenum}
}
\end{minipage}}
\fbox{
\begin{minipage}{0.45\textwidth}
{\whimp{} Round 2 (Weight Computation):

\smallskip
\begin{compactenum}
    \item For all $r \in [d]$:
    \begin{compactenum}
        \item Gather rows $A_{r,*}$ and $B_{r,*}$.
        \item Compute $\|A_{r,*}\|_1$ and construct a data structure that samples $a \in C_A$ proportional
        to $A_{r,a}/\|A_{r,*}\|_1$. Call this distribution $\cA_r$.
        \item Similarly compute $\|B_{r,*}\|_1$ and sampling data structure for $\cB_r$.
    \end{compactenum}
\end{compactenum}
}
\end{minipage}}
\fbox{
\begin{minipage}{0.45\textwidth}
{\whimp{} Round 3 (Candidate Generation):

\smallskip
\begin{compactenum}
    \item For all $r \in [d]$:
    \begin{compactenum}
        \item \label{step:sketch} Gather: For all $a,b \in N(r)$, $h_a, h_b, \|A_{*,a}\|_2, \|B_{*,b}\|_2$.
        \item \label{step:s} Repeat $s\|A_{r,*}\|_1 \|B_{r,*}\|_1/$ ($s$ set to $c(\log n)/\tau$) times:
        \begin{compactenum}
            \item \label{step1} Generate $a \sim \cA_r$.
            \item \label{step2} Generate $b \sim \cB_r$.
            \item Denote the Hamming distance between bit arrays $h_a$ and $h_b$ by $\Delta$.
            \item Compute $\est_{a,b} = \|A_{*,a}\|_2 \|B_{*,b}\|_2 \cos(\pi \Delta/\ell)$.
            \item \label{step:sigma} If $\est \geq \sigma$, emit $((a,b),\est_{a,b})$.
        \end{compactenum}

    \end{compactenum}
\end{compactenum}
}
\end{minipage}}

  \caption{The \whimp{} (Wedges And Hashes In Matrix Product) algorithm}
  \label{alg:whimp}
\end{figure}

\begin{lemma} \label{lem:est-entry} With probability at least $1-1/n^6$
over the randomness of \whimp, for all pairs $(a,b)$, $|\est_{a,b} - A_{*,a}\cdot B_{*,b}| \leq \tau/4$.
\end{lemma}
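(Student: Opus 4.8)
The plan is to show that the estimate $\est_{a,b} = \|A_{*,a}\|_2 \|B_{*,b}\|_2 \cos(\pi\Delta/\ell)$ concentrates around the true dot product $A_{*,a}\cdot B_{*,b}$. The key is the SimHash property of Charikar~\cite{Ch02}: for a single random Gaussian projection vector $g_i$ (the $i$th coordinate of the hashes uses the bits $g(\langle r,i\rangle)$), the probability that the signs $\sgn(\sum_r g(\langle r,i\rangle)A_{r,a})$ and $\sgn(\sum_r g(\langle r,i\rangle)B_{r,b})$ disagree is exactly $\theta_{a,b}/\pi$, where $\theta_{a,b}$ is the angle between the two column vectors. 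Since the normalized dot product is the cosine of this angle, we have $\cos\theta_{a,b} = (A_{*,a}\cdot B_{*,b})/(\|A_{*,a}\|_2\|B_{*,b}\|_2)$.

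First I would observe that the Hamming distance $\Delta$ between $h_a$ and $h_b$ is a sum of $\ell$ independent indicator random variables, each equal to $1$ with probability $\theta_{a,b}/\pi$, so $\EX[\Delta/\ell] = \theta_{a,b}/\pi$. A Chernoff/Hoeffding bound then gives $|\Delta/\ell - \theta_{a,b}/\pi| \leq t$ except with probability $2\exp(-2\ell t^2)$. With $\ell = \lceil c\tau^{-2}\log n\rceil$ and choosing $t$ to be a small constant multiple of $\tau$, this failure probability is at most $n^{-8}$ (say) for a sufficiently large $c$. Next I would translate the additive error on $\Delta/\ell$ into an additive error on $\cos(\pi\Delta/\ell)$: since $\cos$ is $1$-Lipschitz after the $\pi$ scaling contributes a factor $\pi$, we get $|\cos(\pi\Delta/\ell) - \cos\theta_{a,b}| \leq \pi t$. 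Multiplying through by $\|A_{*,a}\|_2\|B_{*,b}\|_2 \leq 1$ (which holds since each column is a normalized vector, so in fact these norms are at most... — more carefully, the columns of the original $A,B$ need not be unit, but $\est$ rescales by exactly these norms, and the bound $A_{*,a}\cdot B_{*,b} = \|A_{*,a}\|_2\|B_{*,b}\|_2\cos\theta_{a,b}$ with $\|A_{*,a}\|_2\|B_{*,b}\|_2 \le 1$ in the normalized similarity-matrix setting, or more generally we absorb it into the constant) yields $|\est_{a,b} - A_{*,a}\cdot B_{*,b}| \leq \tau/4$ for an appropriate choice of the constant $c$. Finally I would take a union bound over all pairs $(a,b)$: there are at most $mn \leq n^2$ pairs, so a per-pair failure probability of $n^{-8}$ gives an overall failure probability of at most $n^{-6}$, as claimed.

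The main obstacle, and the place where care is needed, is the chain of approximations converting the bound on $\Delta/\ell$ into a bound on $\est_{a,b}$: one must verify that the constants line up so that $\pi t \cdot \|A_{*,a}\|_2\|B_{*,b}\|_2 \leq \tau/4$ while simultaneously $2\exp(-2\ell t^2) \leq n^{-8}$ with $\ell = \Theta(\tau^{-2}\log n)$ — this forces $t = \Theta(\tau)$ and pins down how large $c$ must be. A secondary subtlety is confirming that the SimHash collision probability is exactly $\theta_{a,b}/\pi$ for Gaussian projections (this is standard, but relies on the $g$ values being independent $\cN(0,1)$, which the model explicitly grants) and that the $\ell$ coordinates of the hash use independent randomness (guaranteed since the seeds $\langle r,i\rangle$ are distinct across $i$). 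Everything else is a routine Hoeffding bound plus a union bound.
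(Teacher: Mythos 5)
Your proposal is correct and follows essentially the same route as the paper's proof: the SimHash collision probability $\theta_{a,b}/\pi$, a Hoeffding bound on the Hamming distance $\Delta$ with $\ell = \Theta(\tau^{-2}\log n)$, a Lipschitz (Mean Value Theorem) step for $\cos$, rescaling by the column norms, and a union bound over the $\Theta(mn) \leq n^2$ pairs. The only cosmetic difference is in bookkeeping: the paper folds the norm product into the deviation threshold $\ell\tau/(4\pi\|A_{*,a}\|_2\|B_{*,b}\|_2)$ so its Hoeffding exponent silently requires the same boundedness of $\|A_{*,a}\|_2\|B_{*,b}\|_2$ that you state explicitly, so your version is, if anything, slightly more candid about that assumption.
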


\begin{proof} First fix a pair $(a,b)$. We have $\est_{a,b} = \|A_{*,a}\|_2 \|B_{*,b}\|_2 $ $\cos(\pi \Delta/\ell)$,
where $\Delta$ is the Hamming distance between $h_a$ and $h_b$.
Note that $h_a[i] = \sgn(\sum_{r \in [d]} g(\langle r,i\rangle) A_{r,a})$.
Let $\vec{v}$ be the $d$-dimension unit vector with $r$th entry proportional
to $g(\langle r,i\rangle)$. Thus, the $r$th component is a random (scaled) Gaussian, and $\vec{v}$
is a uniform (Gaussian) random vector in the unit sphere. We can write $h_a[i] = \sgn(\vec{v}\cdot A_{*,a})$
and $h_b[i] = \sgn(\vec{v}\cdot B_{*,b})$. The probability that $h_a[i] \neq h_b[i]$ is exactly
the probability that the vectors $A_{*,a}$ and $B_{*,b}$ are on different sides of a randomly
chosen hyperplane. By a standard geometric argument~\cite{Ch02},
if $\theta_{a,b}$ is the angle between the vectors $A_{*,a}$ and $B_{*,b}$,
then this probability is $\theta_{a,b}/\pi$.

Define $X_i$ to be the indicator random variable for $h_a[i] \neq h_b[i]$. Note that the Hamming distance
$\Delta = \sum_{i \leq \ell} X_i$ and $\EX[\Delta] = \ell \theta_{a,b}/\pi$. Applying Hoeffding's inequality,
\begin{eqnarray*}& & \Pr[|\Delta - \EX[\Delta]| \geq \ell\tau/(4\pi\|A_{*,a}\|_2 \|B_{*,b}\|_2)] \\
& < & \exp[-(\ell^2 \tau^2/16\pi^2\|A_{*,a}\|^2_2 \|B_{*,b}\|^2_2)/2\ell] \\
& = & \exp(-(c/\tau^2)(\log n)\tau^2/(32\pi^2\|A_{*,a}\|^2_2 \|B_{*,b}\|^2_2)) < n^{-8}
\end{eqnarray*}

Thus, with probability $>1-n^{-8}$,\\
$|\pi\Delta/\ell - \theta_{a,b}| \leq $
$\tau/(4\|A_{*,a}\|_2 \|B_{*,b}\|_2)$.  By the Mean Value Theorem,
$|\cos(\pi\Delta/\ell) - \cos(\theta_{a,b})| \leq \tau/(4\|A_{*,a}\|_2
\|B_{*,b}\|_2)$.  Multiplying by $\|A_{*,a}\|_2 \|B_{*,b}\|_2$, we get
$|\est_{a,b} - A_{*,a}\cdot B_{*,b}| \leq \tau/4$. We take a union
bound over all $\Theta(mn)$ pairs $(a,b)$ to complete the proof.
\end{proof}

We denote a pair $(a,b)$ as \emph{generated} if it is generated in Steps~\ref{step1} and~\ref{step2} during
some iteration. Note that such a pair is actually output iff $\est_{a,b}$ is sufficiently large.

\begin{lemma} \label{lem:gen} With probability at least $1-1/n^3$ over the randomness
of \whimp, the following hold. The total number of triples output is $O((\tau^{-1}\log n) \max(\|[A^T_B]_{\geq \tau/4}\|_1, 1))$.
Furthermore, if $A_{*,a}\cdot B_{*,b} \geq \tau$, $(a,b)$ is output.
\end{lemma}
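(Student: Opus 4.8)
The plan is to prove the two assertions of \Lem{gen} separately, using the wedge-sampling analysis of Cohen--Lewis together with the guarantee of \Lem{est-entry}.

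First I would set up the probabilistic structure of wedge sampling. For a fixed dimension $r$, the algorithm draws $s\|A_{r,*}\|_1\|B_{r,*}\|_1$ i.i.d.\ pairs, where $a \sim \cA_r$ and $b \sim \cB_r$ independently, so that a given pair $(a,b)$ is produced in a single trial with probability exactly $(A_{r,a}/\|A_{r,*}\|_1)(B_{r,b}/\|B_{r,*}\|_1)$. Multiplying by the number of trials at dimension $r$ and summing over $r$, the expected number of times $(a,b)$ is generated (over the whole run) is $s\sum_{r} A_{r,a}B_{r,b} = s\,(A_{*,a}\cdot B_{*,b})$. Call this quantity $\mu_{a,b}$; with $s = c(\log n)/\tau$ we have $\mu_{a,b} = c(\log n)(A_{*,a}\cdot B_{*,b})/\tau$.

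\textbf{Recall.} If $A_{*,a}\cdot B_{*,b} \ge \tau$ then $\mu_{a,b} \ge c\log n$, so by a standard Chernoff bound the pair $(a,b)$ is generated at least once (indeed $\Omega(\log n)$ times) except with probability at most $n^{-c'}$ for a large constant; a union bound over the $O(mn) = O(n^2)$ relevant pairs leaves failure probability $\ll n^{-3}$. Now condition additionally on the event of \Lem{est-entry} (failure probability $\le n^{-6}$): for such a pair, $\est_{a,b} \ge A_{*,a}\cdot B_{*,b} - \tau/4 \ge 3\tau/4 > \sigma = \tau/2$, so once generated it is emitted in \Step{sigma}. Hence every such pair is output.

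\textbf{Output-size bound.} This is the main obstacle: I must control the total number of emitted triples summed over \emph{all} pairs, not just a fixed one. A pair is emitted only when $\est_{a,b} \ge \sigma = \tau/2$, and on the \Lem{est-entry} event this forces $A_{*,a}\cdot B_{*,b} \ge \tau/4$; so only pairs in the support of $[A^TB]_{\ge \tau/4}$ are ever emitted, and each such pair is emitted at most (number of times generated) times. The expected number of emitted triples is therefore at most $\sum_{(a,b):\,A_{*,a}\cdot B_{*,b}\ge\tau/4} \mu_{a,b} = (c\log n/\tau)\sum_{(a,b):\,A_{*,a}\cdot B_{*,b}\ge\tau/4} A_{*,a}\cdot B_{*,b} = (c\log n/\tau)\|[A^TB]_{\ge \tau/4}\|_1$. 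To turn this expectation into a high-probability bound I would apply a Chernoff/Bernstein bound to the total count of generated-and-emitted triples (a sum of independent Bernoulli-type contributions over trials), using the assumption $\|A^TB\|_1 \ge 1$ and the $\max(\cdot,1)$ in the statement to handle the regime where the expectation is small; the $\log n$ factor in $s$ gives enough slack for the deviation term. The one subtlety to be careful about is that "emitted" is not independent of the $\est$ values — so I would instead bound the (larger) quantity "total number of generations of pairs whose true dot product is $\ge \tau/4$", which \emph{is} a clean sum of independent indicators, observe that on the \Lem{est-entry} event the emitted count is dominated by this quantity, and conclude. Union-bounding the three bad events (recall failure, \Lem{est-entry} failure, concentration failure of the output size) gives the claimed $1 - 1/n^3$.
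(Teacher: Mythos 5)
Your proposal is correct and follows essentially the same route as the paper's proof: compute $\EX[X_{a,b}] = c\tau^{-1}(\log n)\,A_{*,a}\cdot B_{*,b}$, handle recall via a lower-tail Chernoff bound plus the event of \Lem{est-entry}, and bound the output size by the number of generations of pairs with true dot product at least $\tau/4$ (a sum of independent indicators, independent of the hash randomness), followed by an upper Chernoff bound and union bounds. Your explicit remark that ``emitted'' is not independent of the estimates, and that one should instead dominate it by the generation count of pairs in the support of $[A^TB]_{\geq \tau/4}$, is exactly the conditioning-on-$\cE$ step the paper performs.
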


\begin{proof} Let $X_{a,b,r,i}$ be the indicator random variable for $(a,b)$ being output
in the $i$ iteration for dimension $r$. The total number of times that $(a,b)$ is output
is exactly $X_{a,b} = \sum_{r,i} X_{a,b,r,i}$. By the definition of the distributions $\cA_r$ and $\cB_r$,
$\EX[X_{a,b,r,i}] = \frac{A_{r,a}}{\|A_{r,*}\|_1} \cdot \frac{B_{r,b}}{\|B_{r,*}\|_1}$.
Denote $c(\log n)\|A_{r,*}\|_1 \|B_{r,*}\|_1/\tau$ by $k_r$, the number of samples at dimension $r$.
By linearity of expectation,
\begin{eqnarray*} \EX[X_{a,b}] & = & \sum_{r \leq d} \sum_{i \leq k_r} \frac{A_{r,a}B_{r,b}}{\|A_{r,*}\|_1 \|B_{r,*}\|_1}\\
& = & \sum_{r \leq d} \frac{c(\log n)\|A_{r,*}\|_1 \|B_{r,*}\|_1}{\tau}\cdot \frac{A_{r,a}B_{r,b}}{\|A_{r,*}\|_1 \|B_{r,*}\|_1} \\
& = & c\tau^{-1} \log n \sum_{r \leq d} A_{r,a} B_{r,b} = cA_{*,a}\cdot B_{*,b} \tau^{-1}\log n
\end{eqnarray*}
Note that the random choices in creating the hashes is independent of those generating
the candidates.
By \Lem{est-entry}, with probability $>1-n^{-6}$, the following event (call it $\cE$) holds:
$\forall (a,b)$, $|\est_{a,b} - A_{*,a}\cdot B_{*,b}| \leq \tau/4$. Conditioned on $\cE$,
if $A_{*,a}\cdot B_{*,b} < \tau/4$, then $\est_{a,b} < \tau/2$ and $(a,b)$ is not output.
Let $S = \{(a,b) | A_{*,a} \cdot B_{*,b} \geq \tau/4 \}$.
Let the number of triples output be $Y$.
Conditioned in $\cE$, $Y \geq \sum_{(a,b) \in S} [X_{a,b}]$.
Denote the latter random variable as $Z$.
By linearity of expectation and independence of $X_{a,b}$ from $\cE$,
\begin{eqnarray*}
\EX_\cE[Z] & = & \sum_{(a,b) \in S}\EX_\cE[X_{a,b}] \\
&= & c\tau^{-1}\log n \sum_{(a,b) \in S} A_{*,a} \cdot B_{*,b} \\
&= & c\tau^{-1}\log n \|[A^TB]_{\geq \tau/4}\|_1
\end{eqnarray*}
Furthermore, $Z$ is the sum of Bernoulli random variables.
Thus, we can apply a standard upper-Chernoff bound to the sum above, and deduce
that
\begin{eqnarray*}& & \Pr_\cE[Z \geq 4c\tau^{-1}\log n\max(\|[A^TB]_{\geq \tau/4}\|_1,1)] \\
& \leq & \exp(-4c\tau^{-1}\log n) \leq n^{-10}
\end{eqnarray*}
Thus, conditioned on $\cE$, the probability that $Y$ is greater than
$4c\tau^{-1}\log n$ $\max(\|[A^TB]_{\geq \tau/4}\|_1,1)$ is at most $n^{-10}$.
Since $\Pr[\cE] \leq n^{-6}$, with probability at least $1 - n^{-5}$,
the number of triples output is $O((\tau^{-1}\log n) \max(\|[A^T_B]_{\geq \tau/4}\|_1, 1))$.
This proves the first part.

For the second part now. Fix a pair $(a,b)$ such that $A_{*,a} \cdot B_{*,b} \geq \tau$.
We have $\EX[X_{a,b}] \geq c\log n$. By a standard lower tail Chernoff bound, $\Pr[X_{a,b} \geq (c/2)\log n] \leq n^{-10}$.
Thus, $(a,b)$ is guaranteed to be generated. If event $\cE$ happens, then $\est_{a,b} \geq 3\tau/4$.
By a union bound over the complement events, with probability at least $1-n^{-5}$, $(a,b)$ will be generated
and output. We complete the proof by taking a union bound over all $mn$ pairs $(a,b)$.
\end{proof}

The first two statements of \Thm{weed} hold by \Lem{est-entry} and
\Lem{gen}, and the remaining two statements follow by a
straightforward calculation. Hence we skip the remainder of the proof.

%
%

\section{Implementing \whimp} \label{sec:implement}

We implement and deploy \whimp{} in Hadoop~\cite{SKRC10}, which is an
open source implementation of MapReduce~\cite{DeanGhemawat08}. Our
experiments were run on Twitter's production Hadoop cluster, aspects
of which have been described before in~\cite{LR13, LLLLR12,
  GSWY13}. In this section, we discuss our \whimp{} parameter choices
and some engineering details. As explained earlier, all our
experiments have $A = B$.

It is helpful to discuss the quality measures.
Suppose we wish to find all entries above some threshold $\tau > 0$.
Typical choices are in the range $[0.1,0.5]$ (cosine values are rarely
higher in our applications). The support of $[A^TA]_{\geq \tau}$
is denoted by $H_\tau$, and this is the set of pairs that we wish to find.
Let the output of \whimp{} be $S$. The natural aim is to maximize both
precision and recall.
\begin{asparaitem}
    \item Precision: the fraction of output that is ``correct", $|H_\tau \cap S|/|S|$.
    \item Recall: the fraction of $H_\tau$ that is output, $|H_\tau \cap S|/|H_\tau|$.
\end{asparaitem}

\medskip

There are three parameter choices in \whimp, as described in \Thm{weed}. We show
practical settings for these parameters.

\begin{figure*}[t]
  \centering
  \subfloat{\label{fig:pr20}
    \includegraphics[width=.3\textwidth,trim=0 0 0 0]{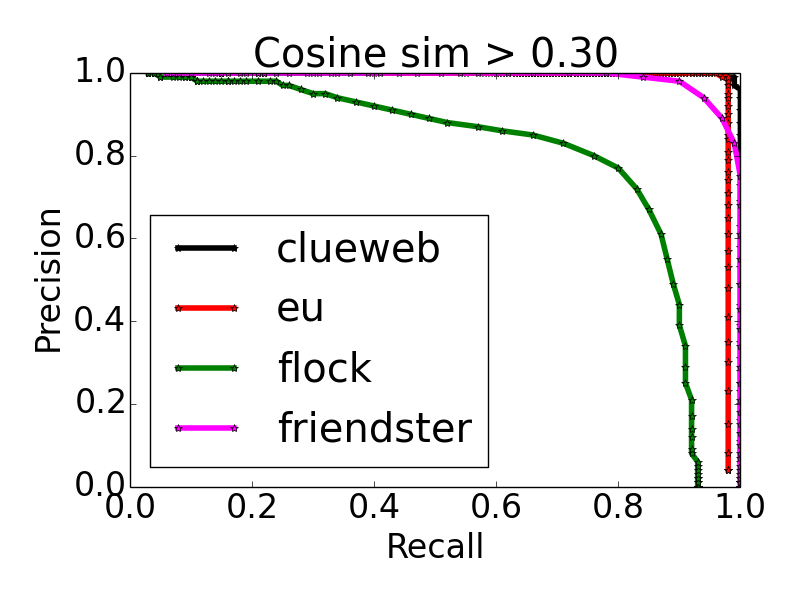}
  }
  \subfloat{\label{fig:pr40}
    \includegraphics[width=.3\textwidth,trim=0 0 0 0]{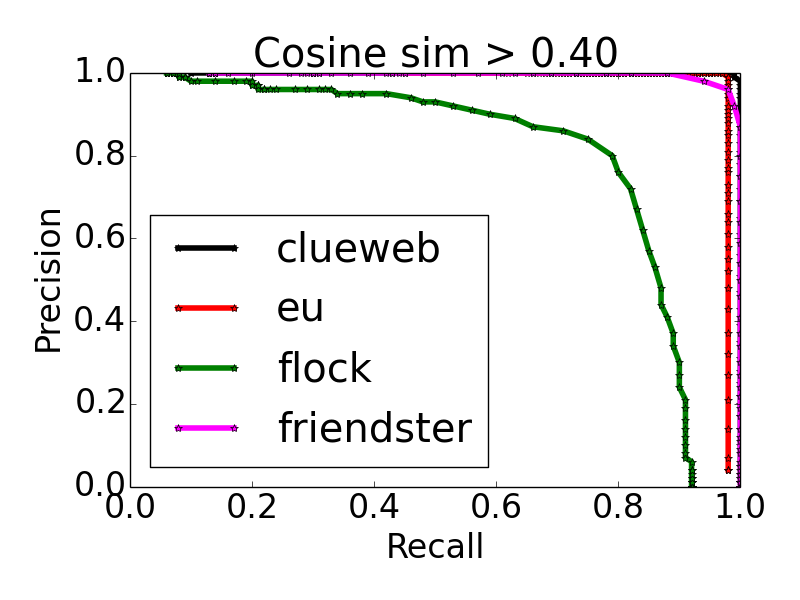}
  }
  \subfloat{\label{fig:pr60}
    \includegraphics[width=.3\textwidth,trim=0 0 0 0]{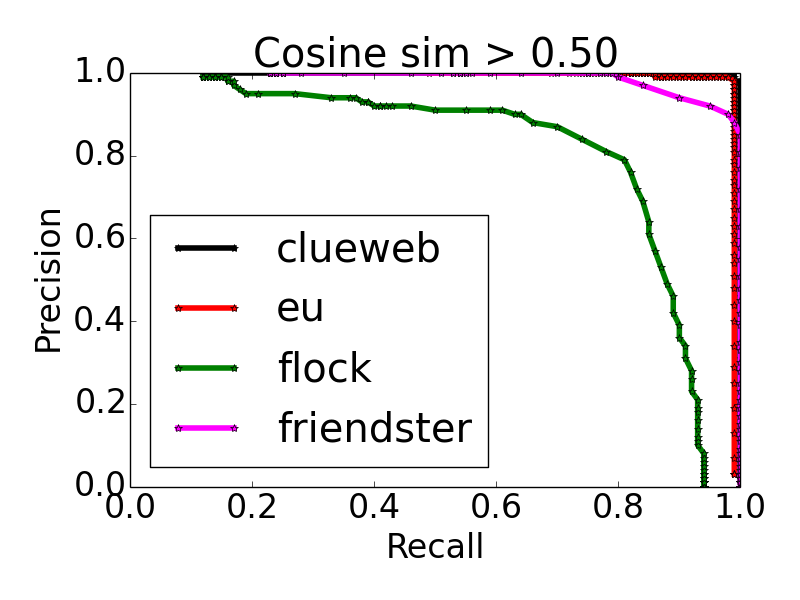}
  }
  \caption{Precision-recall curves}
  \label{fig:pr-all}
\end{figure*}

\textbf{$\ell$, the sketch length:} This appears in \Step{ell}
of Round 1. Larger $\ell$ implies better accuracy
for the SimHash sketch, and thereby leads to higher precision and recall.
On the other hand, the communication in Round 3 requires emitting all sketches,
and thus, it is linear in $\ell$.

A rough rule of thumb is as follows: we wish to distinguish $A_{*,a}\cdot A_{*,b} = 0$
from $A_{*,a} \cdot A_{*,b} > \tau$. (Of course, we wish for more, but this argument
suffices to give reasonable values for $\ell$.) Consider a single bit of SimHash.
In the former case, $\Pr[h(A_{*,a}) = h(A_{*,b}) = 1/2$, while in the latter
case $\Pr[h(A_{*,a}) = h(A_{*,b)}] = 1 - \theta_{a,b}/\pi $
$= \cos^{-1}(A_{*,a}\cdot A_{*,b})/\pi \geq 1 - \cos^{-1}(\tau)/\pi$.
It is convenient to express the latter as $\Pr[h(A_{*,a}) = h(A_{*,b})] \geq 1/2 + \delta$,
where $\delta = 1/2 - \cos^{-1}(\tau)/\pi$.

Standard binomial tail bounds tells us that $1/\delta^2$ independent
SimHash bits are necessary to distinguish the two cases. For
convergence, at least one order of magnitude more samples are
required, so $\ell$ should be around $10/\delta^2$.  Plugging in some
values, for $\tau = 0.1$, $\delta = 0.03$, and $\ell$ should be
$11,000$.  For $\tau = 0.2$, we get $\ell$ to be $2,400$. In general,
the size of $\ell$ is around 1 kilobyte.

\medskip

\textbf{s, the oversampling factor:} This parameter appears
in~\Step{s} of Round 3, and determines the number of wedge samples
generated. The easiest way to think about $s$ is in terms of vanilla
wedge sampling. Going through the calculations, the total number of
wedge samples (over the entire procedure) is exactly
$s\sum_r \|A_{r,*}\|_1\|A_{r,*}\|_1 = s\|A^TA\|_1$.  Fix a pair
$(a,b) \in H_\tau$, with dot product exactly $\tau$.  The probability
that a single wedge sample produces $(a,b)$ is
$A_{*,a} \cdot A_{*,b}/\|A^TA\|_1 = \tau/\|A^TA\|_1$.  Thus, \whimp{}
generates this pair (expected) $\tau/\|A^TA\|_1 \times s\|A^TA\|_1$
$= \tau s$ times.

The more samples we choose, the higher likelihood of finding a pair
$(a,b) \in S_\tau$.  On the other hand, observe that pairs in $H_\tau$
are generated $\tau s$ times, and increasing $s$ increases the
communication in Round 3.  Thus, we require $s$ to be at least
$1/\tau$, and our rule of thumb is $10/\tau$ to get convergence.

\medskip

\textbf{$\sigma$, the filtering value:} This is used in the final
operation, \Step{sigma}, and decides which pairs are actually
output. The effect of $\sigma$ is coupled with the accuracy of the
SimHash sketch. If the SimHash estimate is perfect, then $\sigma$
should just be $\tau$. In practice, we modify $\sigma$ to account for
SimHash error. Higher $\sigma$ imposes a stricter filer and improves
precision at the cost of recall. And the opposite happens for lower
$\sigma$.  In most runs, we simply set $\sigma = \tau$. We vary
$\sigma$ to generate precision-recall curves.
\begin{figure*}[t]
  \centering
  \subfloat{\label{fig:clueweb40}
    \includegraphics[width=.3\textwidth,trim=0 0 0 0]{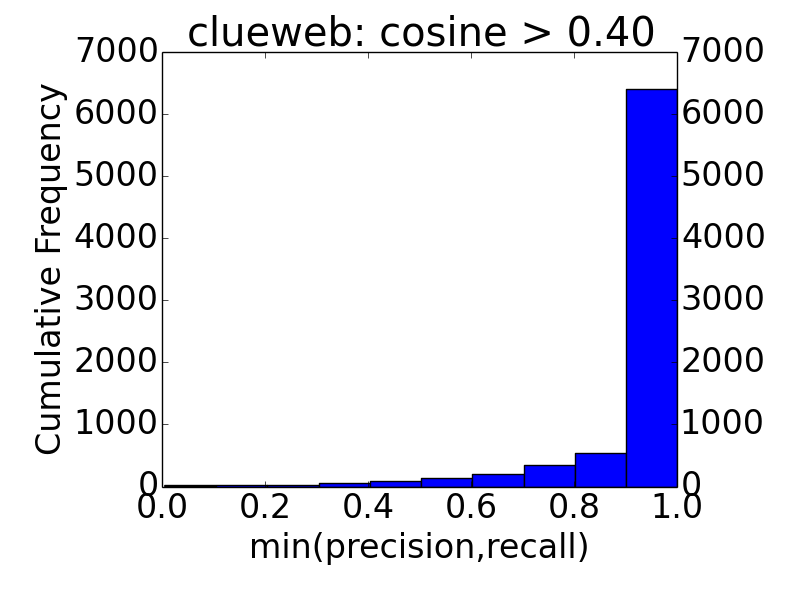}
  }
  \subfloat{\label{fig:eu40}
    \includegraphics[width=.3\textwidth,trim=0 0 0 0]{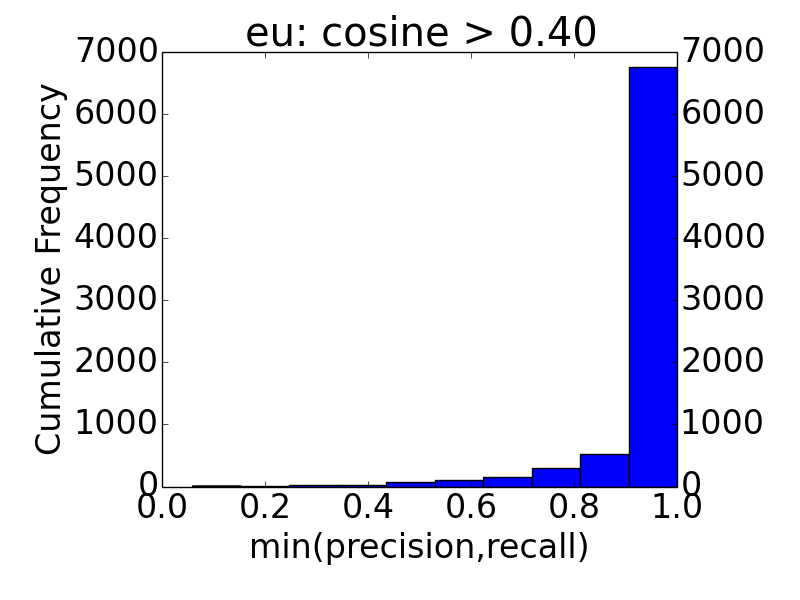}
  }
  \subfloat{\label{fig:flock40}
    \includegraphics[width=.3\textwidth,trim=0 0 0 0]{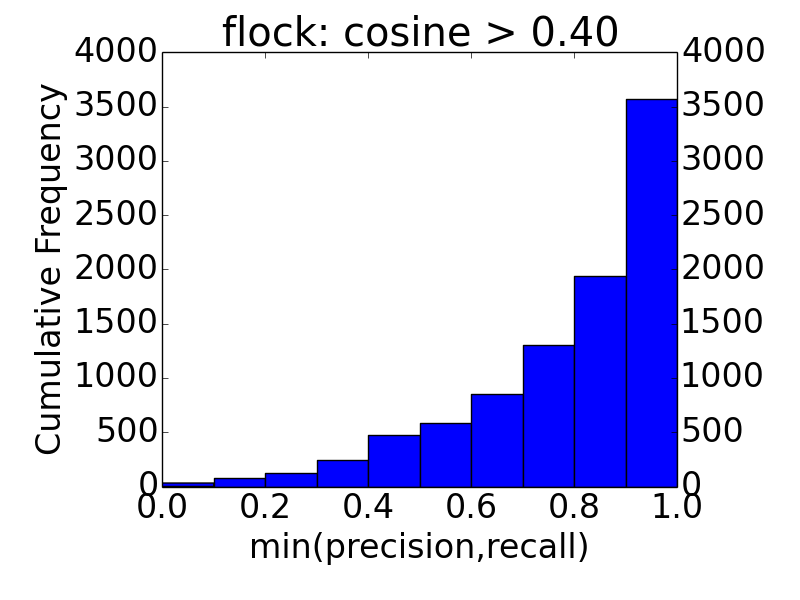}
  }
  \caption{Per-user precision-recall histograms for $\tau = 0.4$}
  \label{fig:peruser40}
\end{figure*}

\vfill\eject
\section{Experimental Setup} \label{sec:setup}

As mentioned earlier, we run all experiments on Twitter's Hadoop
cluster. All the code for this work was written in Scalding, which is
Twitter's Scala API to Cascading, an open-source framework for
building dataflows that can be executed on Hadoop. These are all
mature production systems, aspects of which have been discussed in
detail elsewhere~\cite{LR13, LLLLR12, GSWY13}.

\textbf{Datasets:} We choose four large datasets. Two of them, {\tt
  clueweb} and {\tt eu} are webgraphs. The dataset {\tt friendster} is
a social network, and is available from the Stanford Large Network
Dataset Collection~\cite{snapnets}. The two webgraphs were obtained
from the LAW graph repository~\cite{BRSLLP,BoVWFI}. Apart from these
public datasets, we also report results on our proprietary dataset,
{\tt flock}, which is the Twitter follow graph.

We interpret the graph as vectors in the following way. For each vertex,
we take the incidence vector of the \emph{in-neighborhood}. Thus,
two vertices are similar if they are followed by a similar set of other
vertices. This is an extremely important signal for Twitter's recommendation
system~\cite{GGLS+13}, our main motivating problem. For consistency, we apply
the same viewpoint to all the datasets.

We apply a standard cleaning procedure (for similarity) and remove
high out-degrees.  In other words, if some vertex $v$ has more than
10K followers (outdegree $>$ 10K), we remove all these edges. (We do
not remove the vertex, but rather only its out-edges.) Intuitively,
the fact that two vertices are followed by $v$ is not a useful signal
for similarity. In {\tt flock} and {\tt friendster}, such vertices are
typically spammers and should be ignored.  For webgraphs, a page
linking to more than 10K other pages is probably not useful for
similarity measurement.

\begin{table}[h]
    \begin{tabular}{l || c | c | c}
    Dataset & Dimensions $n=d$ & Size (nnz) & $|A^TA|_1$ \\ \hline \hline
    {\tt friendster} & 65M & 1.6B & 7.2E9 \\ \hline
    {\tt clueweb} &  978M & 42B & 6.8E10 \\ \hline
    {\tt eu} &  1.1B & 84B & 1.9E11 \\ \hline
    {\tt flock} & - & $O(100B)$ & 5.1E12 \\ \hline
    \end{tabular}
    \caption{Details on Datasets}
    \label{tab:data}
\end{table}

We give the size of the datasets in~\Tab{data}. (This is after
cleaning, which removes at most 5\% of the edges. Exact sizes for {\tt
  flock} cannot be revealed but we do report aggregate results where
possible.) Since the underlying matrix $A$ is square, $n=d$. All
instances have at least a billion non-zeros. To give a sense of scale,
the raw storage of 40B non-zeros (as a list of edges/pairs, each of
which is two longs) is roughly half a terrabyte. This is beyond the
memory of most commodity machines or nodes in a small cluster,
underscoring the challenge in designing distributed algorithms.

\textbf{Parameters:} We set the parameters of \whimp{} as follows.
Our focus is typically on $\tau > 0.1$, though we shall present results
for varying $\tau \in [0.1, 0.5]$.
The sketch length $\ell$ is 8192 (1KB sketch size); the oversampling
factor $s$ is $150$; $\sigma$ is just $\tau$. For getting precision-recall
curves, we vary $\sigma$, as discussed in \Sec{implement}.

\textbf{Evaluation:} Computing $A^TA$ exactly is infeasible at these sizes.
A natural evaluation would be pick a random sample of vertices
and determine all similar vertices for each vertex in the sample.
(In terms of matrices, this involves sampling columns of $A$ to get
a thinner matrix $B$, and then computing $A^TB$ explicitly).
Then, we look at the output of \whimp{} and measure the number of similar
pairs (among this sample) it found. An issue with pure uniform sampling
is that most vertices tend to be low degree (the columns have high sparsity).
In recommendation applications, we care for accurate behavior at all scales.

We perform a stratified sampling of columns to generate ground truth.
For integer $i$, we create a bucket with all vertices whose indegree (vector sparsity)
is in the range $[10^i,10^{i+1})$. We then uniformly sample 1000 vertices
from each bucket to get a stratified sample of vertices/columns.
All evaluation is performed with respect to the exact results for this stratified sample.

\section{Experimental results} \label{sec:exp}

\textbf{Precision-recall curves:} We use threshold $\tau$ of $0.2$,
$0.4$, $0.6$.  We compute precision-recall curves for \whimp{} on all
the datasets, and present the results in \Fig{pr-all}. Observe the
high quality results on {\tt clueweb}, {\tt eu}, and {\tt friendster}:
for $\tau\geq 0.4$, the results are near perfect.  The worst behavior
is that of {\tt flock}, which still dominates a precision and recall
of $0.7$ in all cases.  Thus, \whimp{} is near perfect when $\nnz(A)$
has substantially fewer than 100B entries (as our theory
predicts). The extreme size of {\tt flock} probably requires even
larger parameter settings to get near perfect results.

\begin{figure}
    \centering
    \includegraphics[width=0.3\textwidth]{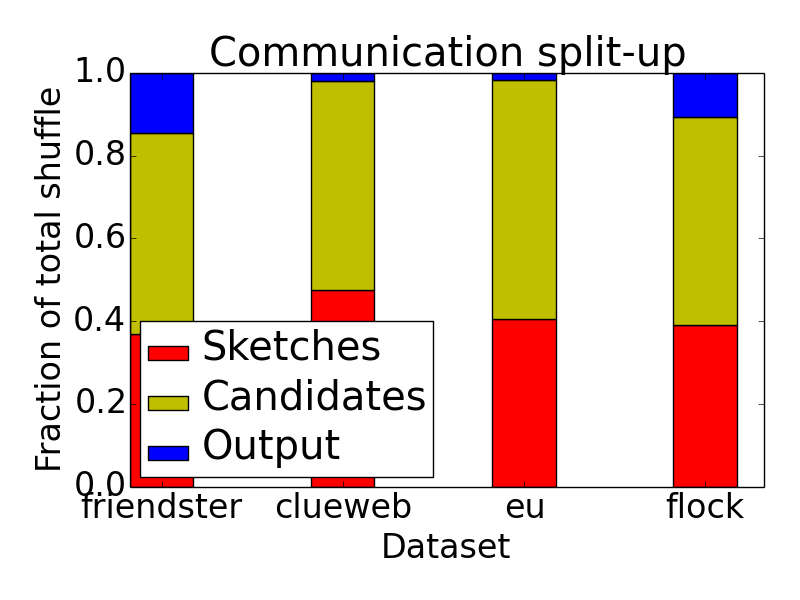}
    \caption{Split-up of shuffle over various rounds for \whimp}
    \label{fig:split}
\end{figure}

\begin{table}[h]
\begin{tabular}{l || c | c | c}
Dataset & \whimp{} (TB)  & DISCO est. (TB) & $\|A^TA\|_1$ \\ \hline \hline
{\tt friendster} & 4.9 & 26.2 & 7.2e+09\\ \hline
{\tt clueweb} & 90.1 & 247.4 & 6.8e+10\\ \hline
{\tt eu} & 225.0 & 691.2 & 1.9e+11\\ \hline
{\tt flock} & 287.0 & 18553.7 & 5.1e+12\\ \hline
\end{tabular}
\caption{Total communication/shuffle cost of \whimp}
\label{tab:shuffle}
\end{table}

\textbf{Per-vertex results:} In recommendation applications,
global precision/recall is less relevant that per-user results.
Can we find similar neighbors for most users, or alternately,
for how many users can we provide accurate results? This is more stringent quality
metric than just the number of entries in $[A^TA]_{\geq \tau}$ obtained.

In the following experiment, we simply set the filtering value $\sigma$ to be $\tau$.
We vary $\tau$ in $0.2$, $0.4$, etc. For each dataset and each vertex in the evaluation sample,
(generation described in \Sec{setup}) we compute the precision and recall for
\whimp{} just for the similar vertices of the sample vertex. We just focus
on the minimum of the precision and recall (this is a lower bound on
any $F_\beta$ score, and is a conservative measure).
The cumulative (over the sample)
histogram of the minimum of the precision and recall is plotted in \Fig{peruser40}.

Just for clarity, we give an equivalent description in terms of matrices.
We compute the (minimum of) precision and recall of entries
above $\tau$ in a specific (sampled) column of $A^TA$. We plot the cumulative histogram
over sampled columns.

For space reasons, we only show the results for $\tau = 0.4$ and
ignore the smallest dataset, {\tt friendster}. The results for {\tt
  clueweb} and {\tt eu} are incredibly accurate: for more than 90\% of
the sample, both precision and recall are above $0.8$, regardless of
$\tau$.  The results for {\tt flock} are extremely good, but not
nearly as accurate.  \whimp{} gets a precision and recall above $0.7$
for at least 75\% of the sample.  We stress the low values of cosine
similarities here: a similarity of $0.2$ is well-below the values
studied in recent LSH-based results~\cite{ShLi14,SL15,SL15-2}. It is
well-known that low similarity values are harder to detect, yet
\whimp{} gets accurate results for an overwhelming majority of the
vertices/users.

\medskip

\textbf{Shuffle cost of \whimp:} The main impetus behind \whimp{} was to get an algorithm with low
shuffle cost. Rounds 1 and 2 only shuffle the input data (and a small factor over it), and
do not pose a bottleneck. Round 3 has two major shuffling steps.
\begin{asparaitem}
    \item Shuffling the sketches: In \Step{sketch}, the sketches are communicated.
    The total cost is the sum of sizes of all sketches, which is $\ell \nnz(A)$.
    \item Shuffling the candidates that are output: In \Step{sigma}, the candidates
    large entries are output. There is an important point here that is irrelevant
    in the theoretical description. We perform a deduplication step to output
    entries only once. This requires a shuffle step after which the final output
    is generated.
\end{asparaitem}
We split communication into three parts: the sketch shuffle, the candidate shuffle,
and the final (deduped) output. The total of all these is presented in \Tab{shuffle}.
(We stress that this is not shuffled together.) The split-up between the various
parts is show in in \Fig{split}. Observe that the sketch and candidate shuffle are
roughly equal. For {\tt friendster} and {\tt flock}, the (deduped) output is itself
more than 10\% of the total shuffle. This (weakly) justifies
the optimality \Thm{weed} in these cases, since
the total communication is at most an order of magnitude more than the desired output.
For the other cases, the output is between 3-5\% of the total shuffle.

\medskip

\textbf{Comparisons with existing art:} No other algorithm works at this scale,
and we were not able to deploy anything else for such large datasets. Nonetheless,
given the parameters of the datasets, we can mathematically argue against other approaches.
\begin{asparaitem}
    \item Wedge sampling of Cohen-Lewis~\cite{CoLe99}, DISCO~\cite{ZaGo13}: Distributed
    version of wedge sampling were given by Zadeh and Goel in their DISCO algorithm~\cite{ZaGo13}.
    But it cannot scale to these sizes. DISCO is equivalent to using Round 2 of \whimp to set up
    weights, and then running Round 3
    without any filtering step (\Step{sigma}). Then, we would look for all pairs $(a,b)$
    that have been emitted sufficiently many times, and make those the final output.
    In this case, CPU and shuffle costs are basically identical,
    since any candidate generated is emitted.

    Consider $(a,b)$ such that $A_{*,a}\cdot A_{*,b} = \tau$. By the wedge sampling calculations,
    $s\|A^TA\|_1$ wedge samples would generate $(a,b)$ an expected $s\tau$ times. We would need
    to ensure that this is concentrated well, since we finally output pairs generated often
    enough. In our experience, setting $s = 50/\tau$ is the bare minimum to get precision/recall
    more than $0.8$. Note that \whimp{} only needs to generate such a wedge sample \emph{once},
    since \Step{sigma} is then guaranteed to output it (assuming SimHash is accurate).
    But vanilla wedge sampling must generate $(a,b)$ with a frequency close to its expectation.
    Thus, \whimp{} can set $s$ closer to (say) $10/\tau$, but this is not enough for the convergence
    of wedge sampling.

    But all the wedges have to be shuffled, and this leads to $10\|A^TA\|_1/\tau$ wedges being
    shuffled. Each wedge is two longs (using standard representations), and that gives
    a ballpark estimate of $160\|A^TA\|_1/\tau$. We definitely care about $\tau = 0.2$,
    and \whimp{} generates results for this setting (\Fig{pr-all}). We compute
    this value for the various datasets in \Tab{shuffle}, and present it as the estimated
    shuffle cost for DISCO.

    Observe that it is significantly large than the \emph{total} shuffle cost of \whimp,
    which is actually split roughly equally into two parts (\Fig{split}).
    The wedge shuffles discussed above are most
    naturally done in a single round. To shuffle more than 200TB would require a more
    complex algorithm that splits the wedge samples into various rounds.
    For {\tt eu} and {\tt flock}, the numbers are more
    than 1000TB, and completely beyond the possibility of engineering. We note that
    {\tt friendster} can probably be handled by the DISCO algorithm.

  \item Locality-Sensitive Hashing~\cite{InMo98,ShLi14}: LSH is an
    important method for nearest neighbor search. Unfortunately, it
    does not perform well when similarities are low but still
    significant (say $\tau = 0.2$). Furthermore, it is well-known to
    require large memory overhead. The basic idea is to hash every
    vector into a ``bucket'' using, say, a small (like 8-bit) SimHash
    sketch. The similarity is explicitly computed on all pairs of
    vectors in a bucket, i.e. those with the same sketch value. This
    process is repeated with a large number of hash functions to
    ensure that most similar pairs are found.

     Using SimHash, the mathematics says roughly the following. (We
     refer the reader to important LSH papers for more
     details~\cite{InMo98,GiInMo99,AnIn08,ShLi14}.)  Let the
     probability of two similar vectors (with cosine similarity above
     $0.2$) having the same SimHash value be denoted $P_1$. Let the
     corresponding probability for two vectors with similarity zero by
     $P_2$. By the SimHash calculations of \Sec{implement},
     $P_1 = 1/2 - \cos^{-1}(0.2)/\pi \approx 0.56$, while $P_2 =
     0.5$. This difference measures the ``gap" obtained by the SimHash
     function. The LSH formula basically tells us that the total
     storage of all the hashes is (at least)
     $n^{1+(\log P_1)/(\log P_2)}$ bytes.  This comes out to be
     $n^{1.83}$. Assuming that $n$ is around 1 billion, the total
     storage is 26K TB. This is astronomically large, and even
     reducing this by a factor of hundred is insufficient for
     feasibility.

\end{asparaitem}

\begin{table}[!t]
\centering
 \small
\caption{Top similar results for a few Twitter accounts, generated
  from \whimp{} on {\tt flock}.}

\begin{tabular}{|c|p{1in}|c|}
\multicolumn{3}{c}{\textbf{Users similar to @www2016ca}} \\ \hline
Rank & Twitter @handle & Score \\ \hline
1 & @WSDMSocial & 0.268 \\
2 & @WWWfirenze & 0.213 \\
3 & @SIGIR2016 & 0.190 \\
4 & @ecir2016 & 0.175 \\
5 & @WSDM2015 & 0.155 \\
\hline
\end{tabular}

\begin{tabular}{|c|p{1in}|c|}
\multicolumn{3}{c}{\textbf{Users similar to @duncanjwatts}} \\ \hline
Rank & Twitter @handle & Score \\ \hline
1 & @ladamic & 0.287 \\
2 & @davidlazer & 0.286 \\
3 & @barabasi & 0.284 \\
4 & @jure & 0.218 \\
5 & @net\_science & 0.200 \\
\hline
\end{tabular}

\begin{tabular}{|c|p{1in}|c|}
\multicolumn{3}{c}{\textbf{Users similar to @POTUS}} \\ \hline
Rank & Twitter @handle & Score \\ \hline
1 & @FLOTUS & 0.387 \\
2 & @HillaryClinton & 0.368 \\
3 & @billclinton & 0.308 \\
4 & @BernieSanders & 0.280 \\
5 & @WhiteHouse & 0.267 \\
\hline
\end{tabular}

\label{table:casestudy}
\end{table}

\textbf{Case Study:} In addition to the demonstration of the
algorithm's performance in terms of raw precision and recall, we also
showcase some examples to illustrate the practical effectiveness of
the approach. Some of these results are presented in
Table~\ref{table:casestudy}. First, note that the cosine score values
that generate the results are around 0.2, which provides justification
for our focus on generating results with these values. Furthermore,
note that even at these values, the results are quite interpretable
and clearly find similar users: for the @www2016ca account, it finds
accounts for other related social networks and data mining
conferences. For @duncanjwatts, who is a network science researcher,
the algorithm finds other network science researchers. And finally, an
example of a very popular user is @POTUS, for whom the algorithm finds
clearly very related accounts.

\bibliographystyle{abbrv}
\bibliography{weed}  

\end{document}